\documentclass[conference,letterpaper]{IEEEtran}

\IEEEoverridecommandlockouts

\usepackage[utf8]{inputenc} 
\usepackage[T1]{fontenc}
\usepackage{url}
\usepackage{ifthen}
\usepackage{cite}
 \usepackage[cmex10]{amsmath} 
 \usepackage{mathrsfs}
\usepackage{amssymb,bm, bbm}

 \usepackage{tikz}
 \usepackage{graphicx}
 \usepackage{pgfplots}
\usepackage{subcaption}
\usepackage{stfloats}

\usepackage[
  letterpaper,
  left=0.64in,
  right=0.64in,
  top=0.76in,
  bottom=1.1in
]{geometry}

\newcommand{\rmd}{\mathrm{d}}
\newcommand{\bbE}{\mathbb{E}}\newcommand{\rme}{\mathrm{e}}

\newcommand{\bbR}{\mathbb{R}}

\newcommand{\cX}{\mathcal{X}}

\newtheorem{thm}{Theorem}

\newtheorem{prop}{Proposition}
\newtheorem{cor}{Corollary}

\newtheorem{rem}{Remark}

\newtheorem{assumption}{Assumption}

\title{ Sub-Gaussian Concentration and Entropic Normality of the Maximum Likelihood Estimator }
\author{%
   \IEEEauthorblockN{Leighton P. Barnes \IEEEauthorrefmark{1},
                     Alex Dytso\IEEEauthorrefmark{2}                     }
\IEEEauthorblockA{\IEEEauthorrefmark{1}%
                     Carnegie Mellon University,
                     Pittsburgh, PA, USA,
                     leightonb@cmu.edu}
   \IEEEauthorblockA{\IEEEauthorrefmark{2}%
Qualcomm Flarion Technology, Inc., Bridgewater, NJ 08807, USA,
                     odytso2@gmail.com}
 }

\begin{document}

\maketitle

\begin{abstract}
 It is well known that, under standard regularity conditions, the maximum likelihood estimator (MLE) satisfies a central limit theorem and converges in distribution to a Gaussian random variable as the sample size grows. This paper strengthens this classical result by developing several stronger
forms of asymptotic normality for the normalized MLE. With additional assumptions on the score, we first establish sub-Gaussian tail bounds and convergence of all moments for the normalized estimation error. We then prove an entropic central limit theorem for a smoothed version of the estimator, showing convergence in relative entropy to the limiting Gaussian law. When the Fisher information of the normalized estimate is bounded, or its density has bounded first derivative, we further show that the smoothing can be removed, yielding entropic normality of the MLE itself. The proofs develop auxiliary tools that may be of independent interest, including exponential consistency bounds, high-moment estimates, and entropy-control arguments for the estimator.
\end{abstract}

\section{Introduction}
Suppose we have samples $X_1,\ldots,X_n$ drawn i.i.d. from a density $f(x|\theta)$ with a parameter $\theta\in\Theta\subset \mathbb{R}$. The density $f(x|\theta)$ is with respect to a base measure $\mu$ on the sample space $\mathcal{X}$. This paper investigates the asymptotic behavior of the distribution of the maximum likelihood estimator (MLE). In particular, we study the distribution of the properly normalized estimation error $Z_n =\sqrt{n}\,(\widehat{\theta}_n - \theta_0)$ as the sample size $n \to \infty$. Under standard regularity conditions, this quantity is known to satisfy a classical central limit theorem (CLT) and converge in distribution to a Gaussian random variable. 


The main goal of this paper is to strengthen this classical result in several directions. These strengthened results come at the expense of additional assumptions on the score, which can be considered natural in many statistical models. 
First, we establish sub-Gaussian tail behavior for the normalized estimation error $Z_n$,
and use this control to upgrade convergence in distribution to convergence of moments.
Second, we prove an entropic form of asymptotic normality for a smoothed version of the estimator, $\widetilde Z_n=\sqrt{1-\varepsilon}Z_n+\sqrt{\varepsilon}Z,$
where $Z\sim\mathcal N(0,1)$ is independent of $Z_n$. Finally, under an additional Fisher information (or bounded first derivative) regularity condition on the densities of $Z_n$, we remove the smoothing and obtain entropic normality of $Z_n$ itself. Convergence in relative entropy is significantly stronger than convergence in distribution and belongs to a class of strong statistical divergences. In particular, convergence in relative entropy implies convergence in total variation distance via Pinsker's inequality and convergence in Wasserstein distance via Talagrand's inequality.

\subsection{Prior Work}

The asymptotic normality of the MLE is a classical result in statistics and forms one of the cornerstones of parametric inference. Under suitable regularity conditions, it is well known that the properly normalized estimation error $\sqrt{n}(\widehat{\theta}_n-\theta_0)$ converges in distribution to a Gaussian random variable whose variance is determined by the Fisher information \cite{cramer1999mathematical,wald1943tests,lecam1970assumptions}.  This CLT result can be extended to a large class of estimators known as M-estimator and L-estimators; the interested reader is referred to classical text such as \cite{van2000asymptotic,lehmann1998theory}.

 The entropic CLT for the sample mean was first studied by \cite{linnik1959information} and later considerably generalized by \cite{barron1986entropy}. There has also been recent activity around finding the rates of convergence and the interested reader is referred to~\cite{artstein2004rate,johnson2004fisher,madiman2007generalized,bobkov2013rate,courtade2018quantitative,bobkov2019renyi} and references therein.  
Besides the sample mean, the entropic CLT is also known to hold for all central order statistics \cite{cardone2022entropic}.

Recent work in learning theory has studied the Kullback--Leibler divergence of MLEs, especially monotonicity properties \cite{viering2019open,sellke2025learning}. However, those works consider KL divergence over the sample space between the distributions induced by the estimated and true parameters. This differs from our goal, which is to establish a stronger mode of convergence for the distribution of the MLE itself.

\subsection{Outline}
Section~\ref{sec:preliminaries} introduces the necessary preliminaries, formulates the problem, and presents several illustrative examples. Section~\ref{sec:main} states our main result. Section~\ref{sec:proofs} is dedicated to proofs. In Section \ref{sec:conclusion} we conclude and discuss some future directions for this work. The remainder of this section establishes the notation used throughout the paper. 

Random variables are denoted by upper case letters (e.g., $X$), while specific deterministic values they can take are denoted by lower case letters (e.g., $x$). Throughout the paper, logarithms are taken in the natural base.

For two random variables $W \sim f_W$ and $V \sim f_V$, the relative entropy (KL divergence) is defined as
\[
D(W\|V) = \int_{\mathbb{R}} f_W(x)\log \frac{f_W(x)}{f_V(x)}\, dx.
\]
We denote by $\mathcal{N}(0,1)$ the standard normal distribution and use $Z$ to denote a standard normal random variable. Convergence in distribution is denoted by $\xrightarrow{d}$.
 We use the standard asymptotic notation: for two functions $f$ and $g$, we write $f(n) = O(g(n))$ if there exists a constant $C>0$ and $n_0$ such that $|f(n)| \le C g(n)$ for all $n \ge n_0$. 

\section{Problem Definition and Some Preliminaries} 
\label{sec:preliminaries}
In this section we set up the required technical preliminaries. We will make a number of definitions and assumptions here that are standard in the existing literature/results regarding the asymptotic normality of the MLE. The more nonstandard assumptions that we make regarding the tail behavior of the score and the smoothness of the statistical model we enumerate explicitly in Section \ref{subsec:additional-regularity}.

We assume the density $f(\cdot|\theta)$ has common support at each $\theta$ (or equivalently we just assume $f(x|\theta)>0$, since we can ignore parts of the sample space with zero probability density). We consider a parameter space $\Theta$ that is an open subset of $\mathbb{R}$ and that has a compact closure $K$ with $|\theta|<B$ for all $\theta\in K$. Let
\begin{align}
    L_n(\theta) = \frac{1}{n}\sum_{i=1}^n  \log f(X_i|\theta)
\end{align}
be the normalized log-likelihood. Let $\widehat{\theta}_n(X_1,\ldots,X_n)$ be a maximum likelihood estimate of $\theta$ from the data $X_1,\ldots,X_n$ in the sense that
\begin{equation}
    \widehat{\theta}_n \in \arg \max_{\theta\in K}L_n(\theta)  . \label{eq:def_ml_estimation}
\end{equation}
We assume that $\log f(x|\theta)$ is twice-differentiable as a function of $\theta$ on the set $\Theta$, and that it is continuous on the closure $K$ (for almost all $x\in\mathcal{X}$). By this continuity, $L_n(\theta)$ achieves its maximum over $K$. Note that if $\widehat{\theta}_n$ is in $\Theta$ itself instead of just its closure, then it satisfies
\begin{align} L'_n(\widehat{\theta}_n) = 0 \; . \end{align}

The key quantity that we seek to study is the normalized error given by
\begin{equation}
    Z_n  = \sqrt{I_X(\theta_0)n}(\widehat{\theta}_n-\theta_0) \; ,
\end{equation}
where $I_X(\theta_0)$ is the Fisher information defined as
\begin{align}
I_X(\theta_0) = -\mathbb{E}_{X|\theta_0}\left[ s_{\theta_0}'(X)\right]. 
\end{align}
and $(\theta,x ) \mapsto s_\theta(x)$ is the \emph{score function} given by 
\begin{equation}
    s_\theta (x) = \frac{\partial}{\partial \theta} \log f(x|\theta).
\end{equation}
The classical CLT for the maximum likelihood estimate can be stated as follows:
\begin{equation}
    Z_n \xrightarrow{d} Z
\end{equation}
as $n\to\infty$ where  $Z\sim\mathcal{N}(0,1)$.

We assume the Fisher information at this true parameter is nonzero. 
Finally, for simplicity we will assume that the model $f(x|\theta)$ is identifiable in the sense that $f(\cdot|\theta)=f(\cdot|\theta')$ only when $\theta=\theta'$.  The above assumptions are rather standard for the MLE setup.

\subsection{Additional Regularity Conditions}
\label{subsec:additional-regularity}
 We now collect the additional assumptions used to upgrade the classical distributional CLT to convergence in relative entropy. Assumption~\ref{ass:lipschitz-loglik} gives the uniform concentration needed for exponential consistency. Assumptions \ref{ass:lipschitz-loglik} and \ref{ass:lipschitz-score-derivative} control the second-order Taylor expansion of the likelihood and yield the high-moment bounds. Assumption~\ref{ass:bounded-fisher-information} is used only in the final step, where we pass from a smoothed estimator $\tilde{Z}_n$ back to $Z_n$.

All probabilistic conditions below are imposed under the true law $P_{\theta_0}$.

\begin{assumption}[Sub-Gaussian Lipschitz envelope for the log-likelihood]
\label{ass:lipschitz-loglik}
There exists a nonnegative function  $H: \cX \to \bbR^{+}$ such that, for all $\theta,\theta'\in K$ and all $x\in\mathcal X$,
\begin{equation}
    \left|\log f(x|\theta)-\log f(x|\theta')\right|
    \le H(x)|\theta-\theta'|.
    \label{eq:lipschitz-loglik}
\end{equation}
Moreover, $H(X)$ is sub-Gaussian under $P_{\theta_0}$.
\end{assumption}

\begin{rem}
By the mean-value theorem, one may take
\begin{equation}
    H(x)\le \sup_{u\in\Theta}|s_u(x)|
    \label{eq:H-score-bound}
\end{equation}
whenever the right-hand side is finite. Additionally, note that Assumption~1 implies that $s_\theta(X)$ is sub-Gaussian. 
\end{rem}

\begin{assumption}[Sub-exponential control of the score derivative]
\label{ass:lipschitz-score-derivative}
There exists a nonnegative function  $\bar H(X)$ such that, for all $\theta\in K$ and  all $x\in\mathcal X$,
\begin{equation}
    \left|s'_{\theta_0}(x)-s'_\theta(x)\right|
    \le \bar H(x)|\theta-\theta_0|.
    \label{eq:lipschitz-score-derivative}
\end{equation}
Moreover, both $s'_{\theta_0}(X)$ and $\bar H(X)$ are sub-exponential under $P_{\theta_0}$.
\end{assumption}

\begin{assumption}[Smoothness for the normalized estimator]
\label{ass:bounded-fisher-information}
Let $p_n$ denote the density of $Z_n$. Either
\begin{enumerate}
    \item there exist constants $N_0<\infty$ and $I_0<\infty$ such that
\begin{equation}
    \sup_{n\ge N_0}\int\frac{(p_n'(z))^2}{p_n(z)}\,dz\le I_0 \; ,
    \label{eq:fisher-density-bound}
\end{equation}
    or,
    \item there exist constants $N_0$ and $R$ such that $|p'_n(z)| \leq R$ uniformly in $z$ and $n\geq N_0$.
\end{enumerate}
\end{assumption}

\subsection{Remarks on Regularity Conditions}
The MLE representation in \eqref{eq:mleform} is not, in general, a sum of independent or weakly dependent random variables. Thus, the usual convolution and entropy-monotonicity methods for entropic CLTs do not apply directly. Instead, we use a method-of-moments approach. Assumptions~\ref{ass:lipschitz-loglik} and~\ref{ass:lipschitz-score-derivative} yield exponential consistency and uniform moment bounds for $Z_n$. More specifically, using Assumptions~\ref{ass:lipschitz-loglik} and~\ref{ass:lipschitz-score-derivative}, we can get convergence of $Z_n$ to $Z$ in terms  moments.
These two assumptions are also sufficient to prove the entropic CLT for the smoothed estimator
\begin{align}
\widetilde{Z}_n = \sqrt{1-\epsilon}\,Z_n + \sqrt{\epsilon}\,Z .
\end{align} 

We note that, without the additional Assumption \ref{ass:bounded-fisher-information}, one cannot in general upgrade the convergence of $Z_n$ to convergence in relative entropy. The reason is that the MLE is defined over the compact set $K$ in \eqref{eq:def_ml_estimation}, and may place positive probability on the boundary of $K$. Consequently, the distribution of $Z_n$ may contain both continuous and discrete components. In this case, $Z_n$ is not absolutely continuous with respect to the limiting Gaussian random variable $Z$, and hence $D(Z_n\|Z)=\infty$.

Therefore, entropic normality of $Z_n$ itself requires an additional regularity condition. Assumption~\ref{ass:bounded-fisher-information} is introduced precisely to rule out such singular behavior, for example discrete atoms in the distribution of $Z_n$. While this assumption may be difficult to verify when the MLE has no closed-form expression, it is not needed if one only seeks entropic normality of the smoothed  $\widetilde Z_n$.

\subsection{Examples of Maximum Likelihood}
If $f(\cdot| \theta)$ belongs to an exponential family, then the MLE is the sample mean of the sufficient statistic \cite{barndorff2014information}. Hence, the entropic CLTs of \cite{linnik1959information,barron1986entropy} already cover this case. Relatedly, the Pitman--Koopman--Darmois theorem \cite{lehmann1998theory} roughly says that exponential families are the only distributions for which the MLE is a function of a sample mean.

We now list a few examples beyond exponential family that will satisfy our regularity conditions:

\begin{itemize}
\item \emph{Pearson Type IV Location Family:}  For this family the parent distribution is given by 
\begin{equation} f(x |\theta)
\propto
\left(1+\frac{(x-\theta)^2}{\sigma^2}\right)^{-m}
{\rme}^{
\nu \arctan\frac{x-\theta}{\sigma}
}
\end{equation}
where $\sigma>0$, $m>\tfrac12$, and $\nu\in\mathbb{R}$ are fixed, and $\theta$ is the unknown location parameter \cite{abramowitz1970handbook}.  This family includes a large number of distribution such as Cauchy ($\nu =0, m=1)$ and  student-$t$ distribution ($\nu=0, m =\frac{\kappa+1}{2}$ where $\kappa$ is degree of freedom). The consistency and asymptotic normality of this family of distributions were considered in \cite{okamura2025asymptotics}; see also \cite{bai1987maximum} for the study of the MLE of the Cauchy distribution.   The score  and its derivative for this family is given by 
\begin{align}
s_\theta(x)&=
\frac{2m ( x-\theta)}{\sigma^2+( x-\theta)^2}-\frac{\nu\sigma}{\sigma^2+( x-\theta)^2}, \\
s'_\theta(x)
&=
\frac{2m(( x-\theta)^2-\sigma^2) - 2\nu\sigma ( x-\theta)}
{(\sigma^2+( x-\theta)^2)^2},
\end{align}
and  are both bounded functions. In particular, as shown in Appendix~\ref{app:pearson_lipt_sub_gaussianity}, the sub-Gaussianity of the Lipschtiz constant follows since  
\begin{equation}
    H(X) \le  \frac{m+|\nu|}{\sigma}
\end{equation}
Finally, since the score $s_\theta(x)$ is a bounded rational function of $x-\theta$ whose numerator and denominator are polynomials of finite degree, its second derivative with respect to $\theta$ is also uniformly bounded in $x$.
\item \emph{Logistic:} 
\begin{equation}
f(x  |\theta)=\frac{e^{-(x-\theta)}}{\bigl(1+e^{-(x-\theta)}\bigr)^2},\qquad x\in\mathbb{R}.
\end{equation}
The score and its derivatives are
\begin{align}
s_\theta(x)&=
\tanh\left(\frac{x-\theta}{2}\right), \\
s_\theta'(x)
&=-\frac12 \mathrm{sech}^2\left(\frac{x-\theta}{2}\right),
\end{align}
where both the score and its derivative are bounded. In particular, as shown in  Appendix~\ref{app:liptschits_for_logistic},  the sub-Gaussianity of the Lipschtiz constant follows since  
\begin{equation}
    H(X) \le  1.
\end{equation}
Similarly, it can be shown that the second derivative of the score function is also bounded. 
\item \emph{Cauchy Model with Unknown Scale:}
\begin{equation}
    f(x | \theta)=\frac{1}{\pi\theta}\frac{1}{1+(x/\theta)^2},\qquad x\in\mathbb{R}.
\end{equation} 
The score and its derivative are given by 
\begin{align}
    s_\theta(x) &=
\frac{x^2-\theta^2}{\theta(\theta^2+x^2)} \\
s_\theta'(x) & = -\frac{1}{\theta^2}
+\frac{2(\theta^2-x^2)}{(\theta^2+x^2)^2}
\end{align}
As shown in Appendix~\ref{app:example:Cauchy},  the sub-Gaussianity of the Lipschitz constant follows since  
\begin{equation}
    H(X) \le   \frac{1}{\theta_{\min}},
\end{equation}
where $\theta_{\min}>0$ is the minimum value of $\theta$ on $K$, which we assume to be compact. A similar bound applies to the second derivative, the score function.

This Cauchy example can be generalized to the Pearson VII scale family of distributions where the score function and its derivative would also be bounded.  
\end{itemize}

\section{Main Result}
\label{sec:main}

We now state the main results. We first establish moment control for $Z_n$, then prove entropic convergence for a smoothed version of $Z_n$, and finally remove the smoothing using Assumption~\ref{ass:bounded-fisher-information}.

\begin{thm}
\label{thm:sub_gauss}
Suppose that Assumptions~\ref{ass:lipschitz-loglik} and
\ref{ass:lipschitz-score-derivative} hold. Then
\begin{itemize}
    \item $Z_n$ is $O(1)$ sub-Gaussian; and
    \item $\mathbb{E}|Z_n|^m \to \mathbb{E}|Z|^m$ for every $m\geq 1$.
\end{itemize}
\end{thm}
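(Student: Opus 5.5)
The plan is to prove a tail bound of the form $P(|Z_n|>t)\le C\rme^{-ct^2}$ for all $t\ge 0$ and all $n\ge n_0$, with constants $C,c$ that do not depend on $n$. The range of $t$ splits into two regimes. In the \emph{far regime}, $|\widehat\theta_n-\theta_0|>\delta$ for a small fixed $\delta$; this corresponds to $t>\delta\sqrt{I_X(\theta_0) n}$. In the \emph{local regime}, $|\widehat\theta_n-\theta_0|\le\delta$ and $t\le \delta\sqrt{I_X(\theta_0)n}$.

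\textbf{Step 1 (exponential consistency).} Let $L(\theta)=\bbE_{\theta_0}\log f(X|\theta)$. By identifiability and continuity on the compact set $K$, the gap
\[
\eta=L(\theta_0)-\sup_{|\theta-\theta_0|\ge\delta,\ \theta\in K}L(\theta)
\]
is strictly positive. Cover $K$ by a finite $\rho$-net. Assumption~\ref{ass:lipschitz-loglik} gives $|L_n(\theta)-L_n(\theta')|\le \bar H_n|\theta-\theta'|$, where $\bar H_n=\frac1n\sum_i H(X_i)$ has sub-Gaussian concentration around $\bbE H$. Each difference $\log f(X_i|\theta)-\log f(X_i|\theta_0)$ is bounded by $H(X_i)\cdot 2B$, so it is sub-Gaussian. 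Hoeffding/Bernstein at the finitely many net points, plus a union bound, then give
\[
P\bigl(\sup_{K}|L_n-L|>\eta/3\bigr)\le C\rme^{-cn}.
\]
On the complement of this event we have $|\widehat\theta_n-\theta_0|<\delta$. In the far regime $|Z_n|\le 2B\sqrt{I_X(\theta_0) n}$, so $t^2\lesssim n$ and the bound $\rme^{-cn}$ is at most $C\rme^{-c't^2}$.

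\textbf{Step 2 (local regime).} Choose $\delta$ so that the interval $[\theta_0-\delta,\theta_0+\delta]$ lies in $\Theta$. On the event $|\widehat\theta_n-\theta_0|<\delta$ the MLE is interior, so $L_n'(\widehat\theta_n)=0$. A mean-value expansion of $L_n'$ around $\theta_0$ gives
\[
0=\frac1n\sum_i s_{\theta_0}(X_i)+(\widehat\theta_n-\theta_0)\bigl(A_n+R_n\bigr),\qquad A_n=\frac1n\sum_i s'_{\theta_0}(X_i),
\]
with $|R_n|\le \bar H_n'\delta$ by Assumption~\ref{ass:lipschitz-score-derivative}, where $\bar H_n'$ is the sample mean of $\bar H(X_i)$. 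Sub-exponential Bernstein bounds show that, outside an event of probability $C\rme^{-cn}$, we have $A_n\le -I_X(\theta_0)/2$ and $\bar H_n'\le 2\bbE\bar H$. Taking $\delta$ small enough then yields
\[
|A_n+R_n|\ge I_X(\theta_0)/4 .
\]
On this good event, $|Z_n|\le C\,|S_n|$ with $S_n=n^{-1/2}\sum_i s_{\theta_0}(X_i)$. The score is mean-zero and sub-Gaussian by the remark after Assumption~\ref{ass:lipschitz-loglik}, so $S_n$ is $O(1)$ sub-Gaussian. This gives $P(|Z_n|>t,\ \text{good})\le 2\rme^{-ct^2}$. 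The exceptional events have probability $\le C\rme^{-cn}$, which is at most $C\rme^{-c t^2}$ because $t\lesssim\sqrt n$ in this regime. Combining the two regimes proves the first bullet.

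\textbf{Step 3 (moments).} The classical CLT $Z_n\xrightarrow{d}Z$ holds under the standing assumptions. The uniform sub-Gaussian bound gives $\sup_n\bbE|Z_n|^{m+1}<\infty$, hence $|Z_n|^m$ is uniformly integrable, and $\bbE|Z_n|^m\to\bbE|Z|^m$ follows from Vitali's theorem / the standard UI-plus-distributional-convergence argument.

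The main obstacle is Step 1. There I must make the uniform law of large numbers exponentially sharp using only the Lipschitz envelope, and handle carefully the boundary of $K$, where the MLE may sit and the score equation fails. I would first try the finite-net argument with the net mesh $\rho$ chosen so that $\rho\cdot 2\bbE H<\eta/6$. The random Lipschitz constant $\bar H_n$ is then controlled by one more sub-Gaussian deviation bound.
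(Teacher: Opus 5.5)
Your proposal is correct and follows essentially the same route as the paper. The shared steps are: exponential consistency from a finite net, with the random Lipschitz constant $\frac1n\sum_i H(X_i)$ controlled by a sub-Gaussian bound and a union bound over net points; a mean-value linearization on a good event where $|L_n''|$ is bounded below via sub-exponential Bernstein bounds, so that $|Z_n|\le C|S_n|$; the crude bound $|Z_n|\le 2B\sqrt{I_X(\theta_0)n}$ on the exponentially rare bad event; and uniform integrability for the moments. The differences are cosmetic. You state sub-Gaussianity as a tail bound rather than the paper's moment bound $\mathbb{E}|Z_n|^m\le (C\sqrt m)^m$, and you cite the classical CLT where the paper rederives it via Slutsky (your Steps 1--2 already supply that). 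Step 1 is also most cleanly phrased, as in the paper, for the centered differences $L_n(\theta)-L_n(\theta_0)$ versus $L(\theta)-L(\theta_0)$, since only these are guaranteed integrable by Assumption~\ref{ass:lipschitz-loglik}.
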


\begin{IEEEproof}
See Section~\ref{sec:proof_theorem_sub_gauss}.
\end{IEEEproof}

\begin{thm}
\label{thm1}
 For a fixed $\epsilon\in(0,1)$, define
\begin{equation}
       \widetilde{Z}_n=\sqrt{1-\epsilon}\,Z_n+\sqrt{\epsilon}\,Z .
\end{equation}
Suppose that Assumptions~\ref{ass:lipschitz-loglik} and
\ref{ass:lipschitz-score-derivative} hold. Then, for every fixed $\epsilon\in(0,1)$,
\begin{equation}
     \lim_{n\to\infty}D(\widetilde Z_n\|Z)=0 .
\end{equation}

\end{thm}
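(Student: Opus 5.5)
The plan is to write the relative entropy of the smoothed variable as a difference of entropies and control each piece using Theorem~\ref{thm:sub_gauss}. Since $\widetilde Z_n$ has the smooth density
\[
\tilde p_n(y)=\mathbb{E}\left[\frac{1}{\sqrt{2\pi\epsilon}}\rme^{-(y-\sqrt{1-\epsilon}Z_n)^2/(2\epsilon)}\right],
\]
we have the identity
\[
D(\widetilde Z_n\|Z)=-h(\widetilde Z_n)+\tfrac12\log(2\pi)+\tfrac12\mathbb{E}[\widetilde Z_n^2].
\]
The second-moment term converges to $\tfrac12$: by Theorem~\ref{thm:sub_gauss} with $m=1,2$ (together with the classical CLT, which gives $\mathbb{E}Z_n\to 0$), $\mathbb{E}\widetilde Z_n^2=(1-\epsilon)\mathbb{E}Z_n^2+\epsilon\to 1$. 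It therefore suffices to show $h(\widetilde Z_n)\to h(Z)=\tfrac12\log(2\pi \rme)$.

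Next I would show pointwise convergence of densities. The map $z\mapsto \phi_\epsilon(y-\sqrt{1-\epsilon}z)$ is bounded and continuous, so $Z_n\xrightarrow{d}Z$ gives $\tilde p_n(y)\to \phi(y)$ for every $y$, where $\phi$ is the standard normal density. The task is then to pass the limit through $-\int \tilde p_n\log\tilde p_n$. Two-sided control of $\log\tilde p_n$ comes from the smoothing. The upper bound is immediate: $\tilde p_n\le (2\pi\epsilon)^{-1/2}$. For the lower bound, Jensen's inequality gives
\[
-\log \tilde p_n(y)\le \tfrac12\log(2\pi\epsilon)+\frac{\mathbb{E}(y-\sqrt{1-\epsilon}Z_n)^2}{2\epsilon}\le C(1+y^2),
\]
uniformly in $n$, because $\sup_n\mathbb{E}Z_n^2<\infty$ by the sub-Gaussianity in Theorem~\ref{thm:sub_gauss}. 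Hence $|\tilde p_n\log\tilde p_n|\le C(1+y^2)\tilde p_n(y)$.

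The main obstacle is uniform integrability of $(1+y^2)\tilde p_n(y)$, since pointwise domination alone does not suffice. I would handle it through the uniform sub-Gaussian tail of $Z_n$ from Theorem~\ref{thm:sub_gauss}. That tail gives the bound
\[
\tilde p_n(y)\le \mathbb{P}(|Z_n|>|y|/(2\sqrt{1-\epsilon}))\,(2\pi\epsilon)^{-1/2}+\phi_\epsilon(|y|/2)\le C\rme^{-c y^2},
\]
where the second term comes from the event $|Z_n|\le |y|/(2\sqrt{1-\epsilon})$, on which $|y-\sqrt{1-\epsilon}Z_n|\ge |y|/2$. Thus $(1+y^2)\tilde p_n(y)\le C(1+y^2)\rme^{-cy^2}$, an integrable dominating function independent of $n$.

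Dominated convergence then yields $h(\widetilde Z_n)\to h(Z)$, and combining this with the second-moment convergence gives $D(\widetilde Z_n\|Z)\to 0$. As a cross-check, one could alternatively use lower semicontinuity, which gives $D\ge 0$ in the limit, together with the upper bound obtained from the entropy convergence.
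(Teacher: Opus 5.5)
Your proof is correct, but it takes a different route from the paper's.

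\textbf{The paper's route.} The paper starts from the Bobkov--Chistyakov--G\"otze bound (Proposition~\ref{prop1}). This controls $D$ by a $\chi^2$-type weighted $L^2$ distance on $[-T,T]$ plus tail terms. It then applies Parseval and splits by frequency:
\begin{itemize}
\item frequencies $|\omega|\le M$ are handled by expanding the characteristic function in moments, which is where all moments with sub-Gaussian growth are needed;
\item frequencies $|\omega|>M$ are handled by the factor $\widehat\phi(\sqrt{\epsilon}\,\omega)$ coming from the smoothing;
\item the entropy tail is handled by $\sup\widetilde p_n\le(2\pi\epsilon)^{-1/2}$.
\end{itemize}
Limits are taken in the order $n\to\infty$, then $M\to\infty$, then $T\to\infty$.

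\textbf{Your route.} You write $D(\widetilde Z_n\|Z)$ as $-h(\widetilde Z_n)$ plus a second-moment term. You get $\widetilde p_n\to\phi$ pointwise directly from weak convergence of $Z_n$. You obtain $|\log\widetilde p_n(y)|\le C(1+y^2)$ from the same sup bound together with Jensen, and conclude by dominated convergence.

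\textbf{What your route buys.} It is shorter and avoids both Fourier analysis and the external lemma. It also works with the law of $Z_n$ directly, so you never need $Z_n$ itself to have a density; this is relevant because of possible atoms at the boundary of $K$. It further shows the theorem needs much less than Theorem~\ref{thm:sub_gauss}. Since
\[
\int C(1+y^2)\,\widetilde p_n(y)\,\rmd y = C\bigl(1+\bbE\widetilde Z_n^2\bigr)\to 2C,
\]
the generalized dominated convergence theorem applies with the $n$-dependent envelope $C(1+y^2)\widetilde p_n(y)$. Hence weak convergence plus $\bbE Z_n^2\to 1$ already suffices, and your sub-Gaussian envelope $C\rme^{-cy^2}$ is not needed. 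Similarly, you do not need $\bbE Z_n\to 0$: the cross term in $\bbE\widetilde Z_n^2$ vanishes by independence and $\bbE Z=0$.

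\textbf{What the paper's route buys.} Its decomposition isolates the terms \eqref{eq:ent_tail} and \eqref{eq:fourier_tail}. The paper reuses these in the proof of Corollary~\ref{prop:without_smooth} to remove the smoothing under the bounded-derivative condition. It also sits closer to a quantitative framework that could yield rates.
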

\begin{IEEEproof}
See Section~\ref{sec:KL}.
\end{IEEEproof}

 The next proposition shows
that, if the densities of the original variables $Z_n$ have uniformly bounded
Fisher information, then the smoothing can be removed.

\begin{cor}
\label{prop:without_smooth}
Suppose that Assumptions~\ref{ass:lipschitz-loglik},
\ref{ass:lipschitz-score-derivative}, and
\ref{ass:bounded-fisher-information} hold. Then, for $Z\sim\mathcal N(0,1)$,
\begin{equation}
     \lim_{n\to\infty}D(Z_n\|Z)=0 .
\end{equation}
   
\end{cor}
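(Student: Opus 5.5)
We remove the smoothing in Theorem~\ref{thm1} by letting $\epsilon\to 0$, and we need the limit in $\epsilon$ to be uniform in $n$. Write $D(Z_n\|Z)=-h(Z_n)+\tfrac12\log(2\pi)+\tfrac12\mathbb{E}Z_n^2$. By Theorem~\ref{thm:sub_gauss}, $\mathbb{E}Z_n^2\to 1$. It therefore suffices to show $\liminf_n h(Z_n)\ge \tfrac12\log(2\pi e)$, since then $\limsup_n D(Z_n\|Z)\le 0$.

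For the smoothed variable, $D(\widetilde Z_n\|Z)=-h(\widetilde Z_n)+\tfrac12\log(2\pi)+\tfrac12\mathbb{E}\widetilde Z_n^2$, and $\mathbb{E}\widetilde Z_n^2=(1-\epsilon)\mathbb{E}Z_n^2+\epsilon\to1$. Theorem~\ref{thm1} then gives $h(\widetilde Z_n)\to\tfrac12\log(2\pi e)$ for each fixed $\epsilon$. The task reduces to a uniform estimate of the form
\begin{equation*}
h(\widetilde Z_n)-h(\sqrt{1-\epsilon}Z_n)\le \delta(\epsilon)\quad\text{for all } n\ge N_0,
\end{equation*}
with $\delta(\epsilon)\to0$. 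Once we have it, $h(Z_n)=h(\sqrt{1-\epsilon}Z_n)-\tfrac12\log(1-\epsilon)\ge h(\widetilde Z_n)-\delta(\epsilon)$. Taking $\liminf_n$ and then $\epsilon\to0$ gives the claim.

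\textbf{Case 1 of Assumption~\ref{ass:bounded-fisher-information}.} Here we use de Bruijn's identity. Let $W=\sqrt{1-\epsilon}Z_n$, which has Fisher information $J(W)=J(Z_n)/(1-\epsilon)\le I_0/(1-\epsilon)$. Convolution does not increase Fisher information, so
\begin{equation*}
h(W+\sqrt{\epsilon}Z)-h(W)=\tfrac12\int_0^{\epsilon}J(W+\sqrt t Z)\,dt\le \tfrac{\epsilon I_0}{2(1-\epsilon)}.
\end{equation*}
This is the required uniform $\delta(\epsilon)$. De Bruijn's identity needs $h(W)$ finite and $W$ of finite variance. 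Finite variance follows from Theorem~\ref{thm:sub_gauss}. Finite Fisher information implies $p_n$ is absolutely continuous and bounded, so $h(Z_n)>-\infty$; together with finite variance, $h(Z_n)$ is finite.

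\textbf{Case 2 of Assumption~\ref{ass:bounded-fisher-information}.} Here $|p_n'|\le R$, and de Bruijn is not available, so I would argue directly on densities. Let $q_n$ be the density of $W=\sqrt{1-\epsilon}Z_n$; its derivative is bounded by $R/(1-\epsilon)$. Writing $\tilde q_n=q_n*\phi_\epsilon$, this gives $|\tilde q_n(x)-q_n(x)|\le \tfrac{R}{1-\epsilon}\,\mathbb{E}|\sqrt\epsilon Z|\le C\sqrt\epsilon$ uniformly in $x$. A bounded derivative together with integrability also gives $\|p_n\|_\infty\le\sqrt{2R}$ uniformly in $n$.

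We then control $\int(\phi(\tilde q_n)-\phi(q_n))$ with $\phi(u)=-u\log u$, splitting the line at $|x|\le T$ and $|x|>T$.
\begin{itemize}
\item On $|x|\le T$, $\phi$ is uniformly continuous on $[0,\sqrt{2R}+1]$ with modulus $\omega$, so this piece contributes at most $2T\,\omega(C\sqrt\epsilon)$.
\item On $|x|>T$, both $\tilde q_n$ and $q_n$ carry sub-Gaussian mass uniformly in $n$ by Theorem~\ref{thm:sub_gauss}. The standard bound $\int_{|x|>T}\phi(q)\le \int_{|x|>T}q\,(1+x^2)+\int_{|x|>T}e^{-1-x^2}$, which comes from $-\log q\le x^2+1$ where $q\ge e^{-1-x^2}$, makes this tail contribution small uniformly in $n$ and $\epsilon$ as $T\to\infty$.
\end{itemize}
Choosing $T$ large and then $\epsilon$ small yields the uniform $\delta(\epsilon)$.

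The main obstacle is exactly this uniformity in $n$: the entropy perturbation estimate must hold uniformly, which in Case 2 rests on the uniform tail control from Theorem~\ref{thm:sub_gauss} and on the uniform sup-norm bound for $p_n$. Since relative entropy is nonnegative, the conclusion is $\lim_n D(Z_n\|Z)=0$.
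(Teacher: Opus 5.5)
Your proposal is correct. In Case~1 it matches the paper's argument. The paper writes $D(Z_n\|Z)-D(\widetilde Z_n\|Z)$ as Barron's integral of the standardized Fisher information along $\sqrt{1-t}\,Z_n+\sqrt t\,Z$, then bounds it by $\epsilon J(Z_n)/2$ using $I_0$ and $\sigma_n^2\to1$. You run the same de Bruijn computation for differential entropy along the heat flow and handle the second moments separately. Since you use only $I(W+\sqrt t Z)\le I(W)$, you get $\epsilon I_0/(2(1-\epsilon))$ instead of $\epsilon I_0/2$, which does not matter.

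In Case~2 your route genuinely differs. The paper does not go through Theorem~\ref{thm1} there. It reapplies Proposition~\ref{prop1} directly to the unsmoothed $p_n$, and uses the derivative bound to supply the two ingredients that smoothing had provided:
\begin{itemize}
\item a uniform sup-norm bound, which controls the entropy tail \eqref{eq:ent_tail};
\item the uniform Fourier tail bound $\int_M^\infty|\widehat p_n|^2=O(1/M)$, obtained via the $L^2$ modulus of continuity $\int(p_n(z+h)-p_n(z))^2\,dz\le 2Rh$, which controls \eqref{eq:fourier_tail}.
\end{itemize}
You instead keep one scheme for both cases, namely removing the smoothing uniformly in $n$. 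You prove stability of differential entropy under a $\sqrt\epsilon$ Gaussian perturbation by hand, using three facts:
\begin{itemize}
\item $\|\tilde q_n-q_n\|_\infty\le C\sqrt\epsilon$, from the Lipschitz bound;
\item uniform continuity of $u\mapsto -u\log u$ on a bounded range;
\item uniform sub-Gaussian tails.
\end{itemize}
Your approach avoids Fourier analysis and gives an explicit modulus $\delta(\epsilon)$. The paper's route is independent of Theorem~\ref{thm1} and reuses the machinery already built for its proof.

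One detail needs tightening. On $|x|>T$, the term $\int_{|x|>T}(-q_n\log q_n)$ enters with a minus sign, so you need a lower bound for it, not only the upper bound you display. It follows from $-q_n\log q_n\ge -q_n\log^+\|q_n\|_\infty$ together with your uniform sup-norm bound. The resulting error is at most $\log^+\|q_n\|_\infty\,\Pr(|W|>T)$, which is small uniformly in $n$.
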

\begin{IEEEproof}
See Section~\ref{proof:smooth}.
\end{IEEEproof}

\section{Proofs}
\label{sec:proofs}

\subsection{Proof of Theorem~\ref{thm:sub_gauss}}
\label{sec:proof_theorem_sub_gauss}
\subsubsection{Consistency and a Uniform Law of Large Numbers} \label{subsec1}

Let $L(\theta) = \mathbb{E}_{\theta_0}[\log f(X|\theta)]$ which is maximized at $\theta=\theta_0$ by the non-negativity of KL. This maximum is unique assuming (as we do) that the model is identifiable. We now consider a uniform law of large large numbers result to compare $L(\theta)$ and $L_n(\theta)$, with explicit bounds on the probability of deviation.

By the Lipschitz assumption,
\begin{align}
|L_n(\theta) - L_n(\theta_0)| & \leq |\theta-\theta_0|\frac{1}{n}\sum_{i=1}^n H(X_i) \label{eq:log_lik_lip}
\end{align}
where $\frac{1}{n}\sum_{i=1}^n H(X_i)$ is sub-Gaussian with parameter $O(1/n)$ (when its mean is removed; it has an $O(1)$ mean).

For a small $\varepsilon>0$, consider the set $K' = K\setminus \mathsf{Ball}(\theta_0,\varepsilon)$. Let $D=\max_{\theta\in K'}L(\theta)$ and note $\Delta = L(\theta_0) - D > 0$ by the continuity of $L$ and identifiability. Again, by the continuity of $L_n$ on compact $K'$, there exists a $\delta>0$ such that $|\theta-\theta'|<\delta \implies |L_n(\theta)-L_n(\theta')|<\Delta/2$ for $\theta,\theta'\in K'$. Importantly, we can pick $\delta$ to be independent of $n$ and the particular samples $X_1,\ldots,X_n$ with high probability (i.e., at least probability $1-e^{-nc}$ for some $c>0$) by using a sub-Gaussian tail bound with \eqref{eq:log_lik_lip}, so we will assume this is the case without loss of generality. Consider the open cover $\{\mathsf{Ball}(\theta,\delta) \; : \; \theta\in K'\}$ of $K'$. By compactness there exists a finite subcover with corresponding ball centers $\theta_1,\ldots,\theta_N$.

By the fundamental theorem of calculus,
\begin{align}
L_n(\theta) - L_n(\theta_0) = \frac{1}{n}\sum_{i=1}^n \int^\theta_{\theta_0} \frac{\partial}{\partial t} \log f(X_i|t) dt
\end{align}
and therefore
\begin{align} \label{eq:difference_deviation}
& (L_n(\theta) - L(\theta)) - (L_n(\theta_0)-L(\theta_0)) \\
& = \frac{1}{n}\sum_{i=1}^n \int^\theta_{\theta_0} \frac{\partial}{\partial t} \log f(X_i|t) dt + L(\theta_0) - L(\theta) \\
& = \frac{1}{n}\sum_{i=1}^n \int^\theta_{\theta_0} \frac{\partial}{\partial t} \log f(X_i|t) dt - \mathbb{E}_{\theta_0}\left[\int^\theta_{\theta_0}\frac{\partial}{\partial t}\log f(X|t)dt\right]. \label{eq:difference_deviation_2}
\end{align}
Zooming in on a single term from the right-hand side of \eqref{eq:difference_deviation_2}, $\int^\theta_{\theta_0} \frac{\partial}{\partial t} \log f(X_i|t) dt$, where $\frac{\partial}{\partial t} \log f(X_i|t)$ has mean $\mu_t$, we have the centralized moments
\begin{align}
& \mathbb{E}_{\theta_0} \left| \int^\theta_{\theta_0} \frac{\partial}{\partial t} \log f(X_i|t) dt - \mathbb{E}_{\theta_0}\left[\int^\theta_{\theta_0}\frac{\partial}{\partial t}\log f(X|t)dt\right] \right|^p \\
& = \mathbb{E}_{\theta_0} \left| \int^\theta_{\theta_0} \left[ \frac{\partial}{\partial t} \log f(X_i|t) dt - \mathbb{E}_{\theta_0}\left[\frac{\partial}{\partial t}\log f(X|t)\right]\right]dt \right|^p \\
& \leq |\theta-\theta_0|^{p-1} \mathbb{E}_{\theta_0} \int^\theta_{\theta_0}\left|  \frac{\partial}{\partial t} \log f(X_i|t) - \mu_t \right|^p dt \\
& = |\theta-\theta_0|^{p-1}  \int^\theta_{\theta_0} \mathbb{E}_{\theta_0} \left|  \frac{\partial}{\partial t} \log f(X_i|t) - \mu_t\right|^p dt \\
& \leq |\theta-\theta_0|^p(\sigma \sqrt{p})^p \\
& \leq (2B\sigma\sqrt{p})^p \; . \label{eq:subgauss_const}
\end{align}
From \eqref{eq:subgauss_const} it follows that \eqref{eq:difference_deviation} is sub-Gaussian with parameter $O(1/n)$.

Using a sub-Gaussian tail bound and a union bound over $\theta_1,\ldots,\theta_N$, we have
\begin{align} \label{eq:event}
\mathsf{Pr}\bigg( \big|(L_n(\theta_i) - L(\theta_i)) - (L_n(\theta_0) - L(\theta_0))\big|  > \Delta/2  \notag\\
\; \text{ for some } i=1,\ldots,N \bigg)  < 2N\exp(-cn\Delta^2) \; .
\end{align}
For any $\theta \in K'$, there exists a $\theta_i$ for some $i=1,\ldots,N$ with $|\theta - \theta_i|<\delta$ and therefore $|L_n(\theta) - L_n(\theta_i)| < \Delta/2$. Suppose we are in the complement of the event in \eqref{eq:event}. Then 
\begin{align}
    \big|(L_n(\theta_i) - L(\theta_i)) - (L_n(\theta_0) - L(\theta_0))\big| \leq \Delta/2
\end{align}
for all $i=1,\ldots,n$, and
\begin{align}
&(L_n(\theta) - L_n(\theta_i)) + (L_n(\theta_i) - L(\theta_i))  \notag\\
&- (L_n(\theta_0) - L(\theta_0))  \\
&< \Delta \\
& \leq L(\theta_0) - L(\theta_i)
\end{align}
so that $L_n(\theta) - L_n(\theta_0) < 0$ (for arbitrary $\theta\in K'$) and $\widehat\theta_n$ cannot be in $K'$. Thus 
\begin{align} \label{eq:ulln}
|\widehat{\theta}_n - \theta_0| < \varepsilon \text{ with prob. at least }1-2N\exp(-cn\Delta^2) \; .
\end{align}
The key thing that differentiates this consistency from the usual results is the exponentially decreasing probability of deviation in \eqref{eq:ulln}. This will be needed to get convergence in moments (for any arbitrarily large moment) below.

\subsubsection{Convergence in Distribution}
By the MVT we have
\begin{equation}
    L_n'(\widehat{\theta}_n) =  L_n'(\theta_0)  +  L_n''(\theta^c_n)  ( \widehat{\theta}_n - \theta_0)
\end{equation}
where the random variable $\theta^c_n $ lies between $\widehat{\theta}_n$ and $\theta_0$. Using consistency $L'_n(\widehat\theta_n) = 0$ with probability approaching one, so we are able to ignore this term.
Now note that
\begin{equation} \label{eq:mleform}
    \sqrt{n}  ( \widehat{\theta}_n - \theta_0)  = -\frac{ \sqrt{n}L_n'(\theta_0)}{L_n''(\theta^c_n)} \; .
\end{equation}
The numerator
\begin{equation}
 \sqrt{n}L_n'(\theta_0) =   \frac{1}{\sqrt{n}}\sum_{i=1}^n  s_{\theta_0}(X_i)
\end{equation}
converges in distribution to normal with zero mean and variance given by $I_X(\theta_0) = \mathsf{var}(  s_{\theta_0}(X) ) $. 
The denominator can be controlled as follows:
\begin{align}
\big| L_n''(\theta^c_n) - L_n''(\theta_0) \big| & =  \left|\frac{1}{n}\sum_{i=1}^n s_{\theta_n^c}'(X_i) - \frac{1}{n}\sum_{i=1}^n  s_{\theta_0}'(X_i) \right| \\
& \leq \big|\theta_n^c - \theta_0\big|\frac{1}{n}\sum_{i=1}^n \bar{H}(X_i) \; , \label{eq:Lipshitz_sec_der_log_like}
\end{align}
where the last bound follows from the assumption that  $ s_{\theta}'(x)$ has the Lipschitz constant $\bar{H}(x)$. 
From \eqref{eq:Lipshitz_sec_der_log_like}  and by the law of large numbers and consistency, we have that  $L_n''(\theta^c_n)$ converges in probability to the same value that $L_n''(\theta_0)$ does, which is $\bbE\left[  s_{\theta_0}'(X_i) \right] = -I_X(\theta_0)$. Note that from the differentiability assumptions we have made the two definitions of the Fisher information coincide
\begin{equation}
- \bbE\left[  s_{\theta_0}'(X_i) \right] = \mathsf{var}(  s_{\theta_0}(X) ). 
\end{equation}
  The proof of convergence in distribution is concluded by using a standard argument via  Slutsky's theorem \cite{billingsley1999convergence} this implies $\sqrt{n}  ( \widehat{\theta}_n - \theta_0)$ converges in distribution to $\mathcal{N}(0,I_X(\theta_0)^{-1})$. 

\subsubsection{Convergence in Moments}
Consider the $m$th moment of $\sqrt{n}  ( \widehat{\theta}_n - \theta_0)$,
\begin{align}
\mathbb{E} \left[|\sqrt{n}  ( \widehat{\theta}_n - \theta_0)|^m \right] & = \int_{E_1} |\sqrt{n}  ( \widehat{\theta}_n - \theta_0)|^m \rmd P_{\theta_0}^n  \notag\\
&+ \int_{E_2} \left| \frac{ \sqrt{n}L_n'(\theta_0)}{L_n''(\theta^c_n)}\right|^m  \rmd P_{\theta_0}^n
\end{align}
where we have split the integral into a sum of integrals over disjoint events $E_1$ and $E_2$. We take $E_1$ to be the ``bad'' set, that is, the set of outcomes $X_1,\ldots,X_n$ such that $|\widehat{\theta}_n-\theta_0|\geq \varepsilon$ or $|-L''_n(\theta^c_n) - I_X(\theta_0)| \geq 3\mathbb{E}[\bar{H}(X)]\varepsilon$. Following from \eqref{eq:ulln}, the first of these conditions happens with probability that is exponentially decaying in $n$. If this is not the case (i.e., $|\widehat{\theta}_n-\theta_0| < \varepsilon$), then
\begin{align}
&|-L''_n(\theta^c_n) - I_X(\theta_0)| \notag\\
& \leq |L''_n(\theta^c_n) - L''_n(\theta_0)| + |-L''_n(\theta_0) - I_X(\theta_0)| \\
&  \leq \frac{\varepsilon}{n}\sum_{i=1}^n \bar{H}(X_i) + |-L''_n(\theta_0) - I_X(\theta_0)| \label{eq:triang_bound}
\end{align}
where each term in \eqref{eq:triang_bound} is less than $3\mathbb{E}[\bar{H}(X)]\varepsilon/2$ with probability decaying exponentially in $n$ (using the sub-exponential tail bounds assumed in the theorem).

Thus $\mathsf{Pr}(E_1)$ also decays with probability exponentially decreasing in $n$ and
\begin{equation}
    \int_{E_1} |\sqrt{n}  ( \widehat{\theta}_n - \theta_0)|^m \rmd P_{\theta_0}^n  \leq (2B\sqrt{n})^m\mathsf{Pr}(E_1) \to 0
\end{equation}
as $n\to\infty$.

The sub-Gaussianity of the Lipschitz constant $H(X)$ immediately implies a similar moment bound on the score, i.e.,
\begin{equation}
\int_{E_2} \left| \sqrt{n}L_n'(\theta_0)\right|^m \rmd P_{\theta_0}^n \leq (\sigma \sqrt{m})^m \; .
\end{equation}
As $\varepsilon\to 0$, the denominator $L''_n(\theta^c_n)$ becomes close to $I_X(\theta_0)$ in the event $E_2$, so for small enough $\varepsilon$ the integral over this event satisfies
\begin{equation}
    \int_{E_2} \left| \frac{ \sqrt{n}L_n'(\theta_0)}{L_n''(\theta^c_n)}\right|^m  \rmd P_{\theta_0}^n  \leq \left(\frac{2}{I_X(\theta_0)}\sigma \sqrt{m}\right)^m \; .
\end{equation}

Noting that
\begin{equation}
 \sup_n \; (2B\sqrt{n})^me^{-cn} = O(m^\frac{m}{2}) \; ,
\end{equation}
we see the whole integral satisfies
\begin{equation} \label{eq:moment_bound}
\mathbb{E}|\sqrt{n}  ( \widehat{\theta}_n - \theta_0)|^m \leq (C\sqrt{m})^m
\end{equation}
for sufficiently large $C$, and therefore $Z_n$ itself is also sub-Gaussian. 

Using uniform integrability, the moment bound in \eqref{eq:moment_bound} along with the convergence in distribution implies convergence in moments, i.e., $\mathbb{E}|Z_n|^m \to \mathbb{E}|Z|^m$ for all $m$.

\subsection{Proof of Theorem~\ref{thm1}}
\label{sec:KL}

We have the following existing bound on the KL divergence from a standard normal density $\phi$; see  \cite[Lemma 2.2]{bobkov2014berry}.
\begin{prop} \label{prop1}
For all $T\geq 0$,
\begin{align}
    &D(Z_n \|Z) \leq  \exp\left(-\frac{T^2}{2}\right)  \notag\\
    &+ \sqrt{2\pi}\int_{-T}^T (p_n(z) - \phi(z))^2 \exp\left(\frac{z^2}{2}\right) \rmd z \label{eq:bound_term1}\\
    & + \frac{1}{2}\int_{|z|\geq T} z^2p_n(z) \rmd z + \int_{|z|\geq T} p_n(z)\log p_n(z)  \rmd z \; . \label{eq:bound_term2}
\end{align}
\end{prop}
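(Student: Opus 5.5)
The plan is to write $D(Z_n\|Z)=\int p_n\log(p_n/\phi)\,\rmd z$ and split the integral into the central region $|z|<T$ and the tails $|z|\ge T$. Each piece gets an elementary pointwise bound. The linear term from the central region is then combined with the constant $\log\sqrt{2\pi}$ from the tails so that the two essentially cancel. We may assume $Z_n$ has a density $p_n$ and that the right-hand side is finite; otherwise there is nothing to prove.

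\emph{Central region.} Write $p_n=\phi(1+u)$ with $u=p_n/\phi-1\ge -1$. Since $\log(1+u)\le u$, we have pointwise
\begin{equation*}
p_n\log\frac{p_n}{\phi}=\phi(1+u)\log(1+u)\le \phi(u+u^2)=(p_n-\phi)+\frac{(p_n-\phi)^2}{\phi}.
\end{equation*}
Because $1/\phi(z)=\sqrt{2\pi}\,e^{z^2/2}$, integrating over $[-T,T]$ gives
\begin{equation*}
\int_{-T}^T p_n\log\frac{p_n}{\phi}\,\rmd z\le \int_{-T}^T (p_n-\phi)\,\rmd z+\sqrt{2\pi}\int_{-T}^T (p_n-\phi)^2 e^{z^2/2}\,\rmd z.
\end{equation*}
The second term is exactly the integral in \eqref{eq:bound_term1}. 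Since both $p_n$ and $\phi$ integrate to one, the first term equals
\begin{equation*}
\int_{|z|\ge T}\phi\,\rmd z-\int_{|z|\ge T}p_n\,\rmd z.
\end{equation*}

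\emph{Tails.} On $|z|\ge T$ we use $-\log\phi(z)=\log\sqrt{2\pi}+z^2/2$, which gives
\begin{equation*}
\int_{|z|\ge T}p_n\log\frac{p_n}{\phi}\,\rmd z=\int_{|z|\ge T}p_n\log p_n\,\rmd z+\frac12\int_{|z|\ge T}z^2p_n\,\rmd z+\log\sqrt{2\pi}\int_{|z|\ge T}p_n\,\rmd z.
\end{equation*}

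\emph{Combining.} Adding the two regions, the mass term contributes $(\log\sqrt{2\pi}-1)\,\mathsf{Pr}(|Z_n|\ge T)$. This is nonpositive because $\log\sqrt{2\pi}\approx 0.919<1$, so it can be dropped. What remains is the two terms in \eqref{eq:bound_term2}, the integral in \eqref{eq:bound_term1}, and the Gaussian tail mass $\mathsf{Pr}(|Z|\ge T)$.

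The main obstacle is the constant-level bookkeeping: we need the Gaussian tail bound $\mathsf{Pr}(|Z|\ge T)\le e^{-T^2/2}$ for all $T\ge 0$ without a factor $2$. A plain Chernoff bound only gives $2e^{-T^2/2}$, so a sharper argument is needed. I would set $g(T)=e^{-T^2/2}-2(1-\Phi(T))$. Then $g(0)=0$ and $g(T)\to0$ as $T\to\infty$. Its derivative is
\begin{equation*}
g'(T)=e^{-T^2/2}\Bigl(\sqrt{2/\pi}-T\Bigr),
\end{equation*}
so $g$ increases and then decreases, and hence stays nonnegative on $[0,\infty)$. This completes the bound.
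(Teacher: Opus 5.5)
Your proof is correct. The paper gives no argument of its own here; it quotes \cite[Lemma 2.2]{bobkov2014berry}. Your proposal is a correct, self-contained derivation of that bound along the standard lines:
the pointwise inequality $t\log t\le (t-1)+(t-1)^2$ with $t=p_n/\phi$ on $[-T,T]$;
the exact expansion $-\log\phi(z)=\log\sqrt{2\pi}+z^2/2$ on the tails;
cancelling the tail mass via $\log\sqrt{2\pi}-1<0$;
and the factor-free Gaussian tail bound $\mathsf{Pr}(|Z|\ge T)\le e^{-T^2/2}$, which your monotonicity argument for $g$ handles correctly.
What you gain over the citation is self-containment, and a clear view of exactly where these constant-level facts enter.

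Your finiteness caveat is what licenses splitting the tail integral into three pieces. All that is really needed is $\int_{|z|\ge T}z^2p_n<\infty$, which makes the negative part of $p_n\log p_n$ integrable there. That condition holds in the paper's application by sub-Gaussianity of $Z_n$ and $\widetilde Z_n$.
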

The first term in \eqref{eq:bound_term1} goes to zero as $T$ gets large, so let us focus on the second term.
\begin{align}
   & \int_{-T}^T (p_n(z) - \phi(z))^2 \exp\left(\frac{z^2}{2}\right) \rmd z \notag\\
    & \leq \exp\left(\frac{T^2}{2}\right) \int_{-T}^T (p_n(z) - \phi(z))^2 \rmd dz \\
     \leq & \exp\left(\frac{T^2}{2}\right) \int_{-\infty}^\infty (p_n(z) - \phi(z))^2 \rmd z \\
     = & \exp\left(\frac{T^2}{2}\right) \int_{-\infty}^\infty \big|\widehat{p}_n(\omega) - \widehat\phi(\omega)\big|^2 \rmd \omega \label{eq:using_parsevals} \\
     = & \exp\left(\frac{T^2}{2}\right) \int_{-M}^M \big|\widehat{p}_n(\omega) - \widehat\phi(\omega)\big|^2 \rmd \omega \label{eq:char_func_conv} \\
    & + 2\exp\left(\frac{T^2}{2}\right) \int_{M}^\infty \big|\widehat{p}_n(\omega) - \widehat\phi(\omega)\big|^2 \rmd \omega \label{eq:char_func_tail}
\end{align}
where \eqref{eq:using_parsevals} is due to Parseval's theorem where we use the following convention for the characteristic function: $\widehat{p}_n(\omega)  = \int p_n(z)e^{-2\pi i z \omega} \rmd z$. 
In the next subsection, we will show that the tail of the characteristic function difference \eqref{eq:char_func_tail} goes to zero as $M\to\infty$ uniformly in $n$. For the term \eqref{eq:char_func_conv}, we use the Taylor expansion
\begin{align}
\widehat{p}_n(\omega) & = \int p_n(z)e^{-2\pi i z \omega} \rmd z \\
& = \int p_n(z) \left(\sum_{m=0}^\infty \frac{(-2\pi i z)^m}{m!}\omega^m\right) \rmd z \\
& = \sum_{m=0}^\infty \frac{(-2\pi i)^m}{m!}\omega^m\left(\int z^m p_n(z) \rmd z\right) \label{eq:taylor_exp} \; .
\end{align}
Because the estimates $Z_n$ themselves are sub-Gaussian random variables,
\begin{equation}
\omega^m \mathbb{E} \left[ |Z_n|^m \right] \leq (M C \sqrt{m})^m \label{eq:mom_bound}
\end{equation}
for all $|\omega|\leq M$. Thus, by Stirling's approximation \eqref{eq:taylor_exp} converges uniformly on the set $|\omega|\leq M$ (and also the interchange of limits in \eqref{eq:taylor_exp} is justified).
Using both \eqref{eq:mom_bound} (and Stirling's approximation) to dominate convergence, and the convergence in moments that was shown in the last subsection, we get
\begin{equation}
\lim_{n\to\infty} \int_{-M}^M \big|\widehat{p}_n(\omega) - \widehat\phi(\omega)\big|^2 \rmd\omega = 0 \; .
\end{equation}

The first integral in \eqref{eq:bound_term2} can be bounded by the sub-Gaussianity of $Z_n$ and the Cauchy-Schwarz inequality,
\begin{align}
&\int 1_{\{|z|\geq T\}}(z) z^2p_n(z) \rmd z \notag\\
& \leq \left( \int 1_{\{|z| \geq T\}}(z) p_n(z) \rmd x\int z^4p_n(z) \rmd z \right)^\frac{1}{2} \; ,
\end{align}
which goes to  as $T\to\infty$ since $ \int z^4p_n(z) \rmd z<\infty$ \; .

There are two terms that remain to be controlled, 
\begin{equation} \label{eq:ent_tail}
    \int_{|z|\geq T} p_n(z)\log p_n(z) \rmd z
\end{equation}
from \eqref{eq:bound_term2}, and 
\begin{equation} \label{eq:fourier_tail}
    \int_{M}^\infty \big|\widehat{p}_n(\omega) - \widehat\phi(\omega)\big|^2 \rmd\omega
\end{equation}
from \eqref{eq:char_func_tail}. To deal with these terms, we will consider the smoothed version of $Z_n$ that is defined by
\begin{equation}
\widetilde{Z}_n = \sqrt{1-\epsilon}\,Z_n + \sqrt{\epsilon}\,Z \; .
\end{equation}
Let $\widetilde{p}_n$ be the corresponding density for $\widetilde{Z}_n$. By taking the sum of independent sub-Gaussian random variables, all of the above moment bounds apply to $\widetilde{Z}_n$ just as they did to $Z_n$. Also, by the continuous mapping theorem, we still have $\widetilde{Z}_n\xrightarrow{d} Z$.

For \eqref{eq:ent_tail},
\begin{align}
\sup_z \widetilde{p}_n(z) & \leq \frac{1}{\sqrt{\epsilon(1-\epsilon)}} \int p_n\left(\frac{z-x}{\sqrt{1-\epsilon}}\right)\phi\left(\frac{x}{\sqrt{\epsilon}}\right) \rmd x \\
& \leq \frac{1}{\sqrt{2\pi\epsilon}}
\end{align}
so that
\begin{equation}
    \int_{|z|\geq T} \hspace{-0.1cm}\widetilde{p}_n(z)\log \widetilde{p}_n(z) \rmd z \leq \left(\log\frac{1}{\sqrt{2\pi\epsilon}} \right)\int_{|z|\geq T} \hspace{-0.1cm}\widetilde{p}_n(z) \rmd z
\end{equation}
which goes to zero as $T\to\infty$.

For the tail of the characteristic function,
\begin{align}
    &\int_{M}^\infty \big|\widehat{\widetilde{p}}_n(\omega) - \widehat\phi(\omega)\big|^2 \rmd \omega  \notag \\
    & \leq 2\int_{M}^\infty \big|\widehat{\widetilde{p}}_n(\omega)\big|^2 \rmd \omega + 2 \int_{M}^\infty \big|\widehat\phi(\omega)\big|^2 \rmd \omega \\
    & \leq 4\int_M^\infty e^{-c\omega^2} \rmd \omega, \label{Eq:Bound_on_diff}
\end{align}
where the last inequality follows since the characteristic function of the sum  can be bounded as $\big|\widehat{\widetilde{p}}_n(\omega)\big| = \big|\widehat{p}_n( \sqrt{1-\epsilon}\omega) \hat{\phi} (\sqrt{\epsilon} \omega) \big| \le  \big | \hat{\phi} (\sqrt{\epsilon} \omega) \big| $. We note that \eqref{Eq:Bound_on_diff} 
 goes to zero as $M\to\infty$. 

Taking $n\to\infty$, then $M\to\infty$, and finally $T\to\infty$, yields
\begin{align}
       \lim_{n \to \infty} D(\widetilde{Z}_n \| Z) = 0 \; .
\end{align}

\subsection{Proof of Corollary~\ref{prop:without_smooth}}
\label{proof:smooth}

The proof follows by taking $\epsilon \to 0$ and use the final regularity condition, i.e. the boundedness of the Fisher information for the density $p_n(z)$, to ensure that this changes the resulting relative entropy $D(Z_n\|Z)$ in a continuous way. To this end, we use the following (slightly modified version of Lemma 1) from \cite{barron1986entropy}.
\begin{prop}
If $\sigma_n^2$ is the variance of $Z_n$ and
\begin{equation}J(Z_n) = \sigma_n^2 \mathbb{E}\left[\left(\frac{p'_n(Z_n)}{p_n(Z_n)} + \frac{Z_n}{\sigma_n^2}\right)^2\right]
\end{equation}
is the standardized Fisher information for $Z_n$, then
\begin{align}
D(Z_n\|Z)-D(\widetilde{Z}_n\|Z) & = \int_0^\epsilon \frac{  J(\sqrt{1-t}Z_n + \sqrt{t}Z) \rmd t}{2(1-t)} \\
& \leq \int_0^\epsilon J(Z_n)\frac{ \rmd t}{2} = \frac{\epsilon J(Z_n)}{2} \; .
\end{align}
\end{prop}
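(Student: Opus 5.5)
The plan is to read the proposition in Barron's normalization, where the standardized Fisher information $J$ pairs naturally with the relative entropy from the Gaussian with \emph{matched} mean and variance. I will prove the identity in that setting and then compare with $D(\cdot\|Z)$ for the standard normal, using that $\sigma_n\to1$ and $\mathbb{E}Z_n\to0$ by Theorem~\ref{thm:sub_gauss}. Write $Y=Z_n$, $Y_t=\sqrt{1-t}\,Y+\sqrt t\,Z$ and $\rho_t=\log p_{Y_t}'$. For $t>0$, $Y_t$ has a smooth, positive density with Gaussian-type tails, since $Y$ is sub-Gaussian; all differentiations under the integral are then justified by dominated convergence.

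\textbf{Step 1: de Bruijn along the Ornstein--Uhlenbeck path.} The density of $Y_t$ solves a Fokker--Planck equation in $t$. Integrating by parts, I expect
\begin{equation*}
\frac{\rmd}{\rmd t}D(Y_t\|Z)=-\frac{1}{2(1-t)}\,\mathbb{E}\big[(\rho_t(Y_t)+Y_t)^2\big]=:-\frac{I_{\rm rel}(Y_t)}{2(1-t)} .
\end{equation*}
Integrating from $0$ to $\epsilon$ gives $D(Z_n\|Z)-D(\widetilde Z_n\|Z)=\int_0^\epsilon I_{\rm rel}(Y_t)\,\rmd t/(2(1-t))$. The endpoint $t\downarrow0$ is handled by lower semicontinuity of $D$ together with Assumption~\ref{ass:bounded-fisher-information}.

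\textbf{Step 2: relate $I_{\rm rel}$ to the stated $J$.} Expanding $\rho+Y=(\rho+Y/\sigma^2)+Y(1-\sigma^{-2})$ and using $\mathbb{E}[Y\rho(Y)]=-1$ gives, for centered $Y$ with variance $\sigma^2$,
\begin{equation*}
I_{\rm rel}(Y)=\frac{J(Y)}{\sigma^2}+\frac{(\sigma^2-1)^2}{\sigma^2}.
\end{equation*}
Thus $I_{\rm rel}=J$ exactly when $\sigma=1$, and this is the case in which the displayed identity holds verbatim. In the same way, $D(Y\|Z)$ equals Barron's standardized divergence $D(Y\|\mathcal N(0,\sigma^2))$ plus the explicit Gaussian term $\tfrac12(\sigma^2-1-\log\sigma^2)$, up to a mean correction. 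Both discrepancies are continuous functions of $(\mathbb{E}Z_n,\sigma_n^2)$ and vanish as $n\to\infty$ by the moment convergence in Theorem~\ref{thm:sub_gauss}. So I will prove the identity in standardized form, where it is scale invariant and one may rescale to $\sigma=1$. The version with $D(\cdot\|Z)$ then follows up to an $o(1)$ term, which is all that Corollary~\ref{prop:without_smooth} requires.

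\textbf{Step 3: the inequality.} I will apply the Blachman--Stam inequality $1/I(U+V)\ge 1/I(U)+1/I(V)$ to $U=\sqrt{1-t}\,Y$ and $V=\sqrt t\,Z$. In standardized form, with $\lambda=(1-t)\sigma^2/((1-t)\sigma^2+t)$, this gives $J(Y_t)\le\lambda J(Y)+(1-\lambda)J(Z)=\lambda J(Y)$. When $\sigma=1$ we have $\lambda=1-t$, so $J(Y_t)/(2(1-t))\le J(Y)/2$, and integrating over $[0,\epsilon]$ yields $\epsilon J(Z_n)/2$. For general $\sigma_n$ the factor is $\lambda/(1-t)\le 1+O(|\sigma_n^2-1|)$, which again contributes only an $o(1)\cdot\epsilon J(Z_n)$ correction.

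\textbf{Main obstacle.} The main obstacle is Step 1 at the endpoint $t=0$. $Z_n$ need not be smooth, and interchanging $\rmd/\rmd t$ with the integral requires uniform control of $\rho_t$ as $t\downarrow0$. I would first establish the identity on $[\delta,\epsilon]$, then let $\delta\to0$. On the left side I would use continuity of $t\mapsto D(Y_t\|Z)$ at $0$, which holds under finite Fisher information (case 1 of Assumption~\ref{ass:bounded-fisher-information}) or under bounded $p_n'$ together with sub-Gaussian tails (case 2). On the right side I would use monotone convergence, since the integrand is nonnegative.
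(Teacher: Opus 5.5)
Your proposal is correct and follows essentially the argument the paper invokes by citing Barron's Lemma~1. You integrate de Bruijn's identity along $t\mapsto\sqrt{1-t}\,Z_n+\sqrt{t}\,Z$ and bound the integrand with the Blachman--Stam inequality, which for $\sigma_n=1$ gives $J(\sqrt{1-t}\,Z_n+\sqrt{t}\,Z)\le(1-t)J(Z_n)$.

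Your caveat is also right, and it is a point the paper's ``slightly modified'' glosses over. The display holds verbatim only when $\sigma_n^2=1$: for $Z_n\sim\mathcal{N}(0,\sigma^2)$ with $\sigma\neq1$, the right-hand side is $0$ while the left-hand side is strictly positive. In general it holds only up to a factor $1+O(|\sigma_n^2-1|)$ and an additive term that vanishes as $\sigma_n^2\to1$. That is enough for Corollary~\ref{prop:without_smooth}, since $\sigma_n\to1$.
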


Furthermore,
\begin{equation}
J(Z_n) = -1 + \sigma_n^2 I(Z_n) \leq -1 + \sigma_n^2 I_0
\end{equation}
where $I(Z_n)$ is the Fisher information associated with the density $p_n$. The variance $\sigma_n^2 \to 1$ by convergence in moments, and thus
\begin{equation}
D(Z_n\|Z)-D(\widetilde{Z}_n\|Z) \leq \frac{\epsilon I_0}{2}
\end{equation}
for sufficiently large $n$. Therefore $\limsup_{n\to\infty} D(Z_n\|Z) \leq \epsilon I_0/2$ for any $\epsilon>0$ and taking $\epsilon \to 0$ we arrive at $D(Z_n\| Z) \to 0$ as desired.

 If instead we have the alternate smoothness condition that 
\begin{align}|p'_n(z)|<R \;,\label{eq:deriv_bound}\end{align} we note that both terms \eqref{eq:ent_tail} and \eqref{eq:fourier_tail} can be bounded by terms that go zero as $T,M\to\infty$. For \eqref{eq:ent_tail} note that \eqref{eq:deriv_bound} implies
\begin{align} \label{eq:property2}
\sup_z p_n(z) \leq R
\end{align}
since $p_n(z)$ is a probability density. For \eqref{eq:fourier_tail} we note that
\begin{align} \label{eq:property1}
\int_M^\infty |\widehat{p}_n(\omega)|^2 d\omega = O\left(\frac{1}{M}\right) \; .
\end{align}
This can be seen as follows. By the Mean Value Theorem and the non-negativity of $_n$, we can bound the squared difference for any shift $h > 0$:
\begin{align}
(p_n(z+h) - p(z))^2 \le Rh(p_n(z+h) + p_n(z))
\end{align}
Because $p_n(z)$ integrates to 1, integrating both sides over $\mathbb{R}$ gives the $L^2$ translation bound:
\begin{align} 
\int_{-\infty}^{\infty} & (p_n(z+h) - p_n(z))^2 \, dz \\
&\le Rh \int_{-\infty}^{\infty} (p_n(z+h) + p_n(z)) \, dx = 2Rh \; .\label{eq:modcont}
\end{align}
Using standard Fourier analysis, the $L^2$ modulus of continuity \eqref{eq:modcont} implies \eqref{eq:property1}.


\section{Conclusion} \label{sec:conclusion}
In this paper, we derived an entropic version of the Central Limit Theorem (CLT) for the maximum likelihood estimator (MLE), establishing a stronger mode of convergence than the classical distributional CLT. A natural direction for future work is to extend this line of reasoning beyond MLEs. One promising avenue---likely to benefit from similar technical tools---is the study of M-estimators, where the score function is replaced by a general estimating function with comparable smoothness and boundedness properties.

Another natural extension is the development of multivariate versions of the results presented here. Since the entropic and Fisher-information--based techniques used in this work are inherently vectorial, we expect that the core arguments should carry over with minimal conceptual modification.

A closely related line of research investigates the monotonicity properties of estimators. For example, it is known that the sample mean satisfies a monotonicity property: the mapping
\[
n \mapsto D\!\left(\frac{1}{\sqrt{n}} \sum_{i=1}^n X_i \,\bigg\|\, Z\right)
\]
is non-increasing~\cite{courtade2016monotonicity}. In contrast, central order statistics, while obeying the entropic CLT, do not obey a similar monotonicity property~\cite{cardone2022entropic}. It would therefore be interesting to investigate whether and under what conditions an analogous phenomenon can be established in the context of the MLE.

\bibliography{refs.bib}
\bibliographystyle{IEEEtran}

\appendices

\section{Lipschitz constant for Pearson Example}
\label{app:pearson_lipt_sub_gaussianity}

To bound the Lipschitz constant, we use the bound in \eqref{eq:H-score-bound}. Since the score is given by
\begin{equation}
    s_\theta(x)
=\frac{2m(x-\theta)}{\sigma^2+(x-\theta)^2}
-\frac{\nu\sigma}{\sigma^2+(x-\theta)^2}.
\end{equation}
Let $y=x-\theta$. Then
\[
|s_\theta(x)|
\le
\frac{2m|y|}{\sigma^2+y^2}
+
\frac{|\nu|\sigma}{\sigma^2+y^2}.
\]
The function $y\mapsto \frac{2m|y|}{\sigma^2+y^2}$ is maximized at $|y|=\sigma$
and equals $\frac{m}{\sigma}$, while
$y\mapsto \frac{|\nu|\sigma}{\sigma^2+y^2}$ is maximized at $y=0$
and equals $\frac{|\nu|}{\sigma}$.
Therefore, for all $\theta$ and $x$,
\begin{equation}
    |s_\theta(x)| \le \frac{m+|\nu|}{\sigma}.
\end{equation}

This bound is uniform over $\theta\in K$, proving the claim.

\section{Lipschitz constant for the logistic example}
\label{app:liptschits_for_logistic}

To bound the Lipschitz constant, we use the bound in \eqref{eq:H-score-bound}. Recall, that for logistic example the score is given by 
\begin{equation}
s_\theta(x)=\tanh\!\left(\frac{x-\theta}{2}\right).
\end{equation}
Since $|\tanh(u)|\le 1$ for all $u\in\mathbb{R}$, we have
\begin{equation}
   \sup_{u\in K}|s_u(x)| \le 1. 
\end{equation}
This proves the claim with $H \le 1$.

\section{Lipschitz constant for Cauchy Example}
\label{app:example:Cauchy}
The score function is given by 
\begin{equation}
    s_\theta(x)
=\frac{x^2-\theta^2}{\theta(\theta^2+x^2)}.
\end{equation}
Using $|x^2-\theta^2|\le x^2+\theta^2$,
\begin{equation}
|s_\theta(x)|
=\frac{|x^2-\theta^2|}{\theta(\theta^2+x^2)}
\le
\frac{x^2+\theta^2}{\theta(\theta^2+x^2)}
=\frac{1}{\theta}.
\end{equation}
Since $\theta\ge \theta_{\min}$ for all $\theta\in K$,
\begin{equation}
    |s_\theta(x)| \le \frac{1}{\theta_{\min}}.
\end{equation}
This bound is uniform in both $x$ and $\theta$, proving the claim.

\end{document}